\newtheorem{theorem}{Theorem}
\newtheorem{lemma}{Lemma}
\newtheorem{problem}{Problem}
\newcommand{\ket}[1]{\left | #1 \right\rangle}
\newcommand{\bra}[1]{\left \langle #1 \right |}
\newcommand{\half}{\mbox{$\textstyle \frac{1}{2}$}}
\newcommand{\Tr}{\text{Tr}}
\newcommand{\proj}[1]{\ket{#1}\bra{#1}}
\newcommand{\identity}{\mathbb{1}}
\renewcommand{\epsilon}{\varepsilon}
\newcommand{\qw}[1][-1]{\ar @{-} [0,#1]}
\newcommand{\qwx}[1][-1]{\ar @{-} [#1,0]}
\newcommand{\gate}[1]{*{\xy *+<.6em>{#1};p\save+LU;+RU **\dir{-}\restore\save+RU;+RD **\dir{-}\restore\save+RD;+LD **\dir{-}\restore\POS+LD;+LU **\dir{-}\endxy} \qw}
\newcommand{\meter}{\gate{\xy *!<0em,1.1em>h\cir<1.1em>{ur_dr},!U-<0em,.4em>;p+<.5em,.9em> **h\dir{-} \POS <-.6em,.4em> *{},<.6em,-.4em> *{} \endxy}}
\newcommand{\control}{*!<0em,.025em>-=-{\bullet}}
\newcommand{\ctrl}[1]{\control \qwx[#1] \qw}
\newcommand{\multigate}[2]{*+<1em,.9em>{\hphantom{#2}} \qw \POS[0,0].[#1,0];p !C *{#2},p \save+LU;+RU **\dir{-}\restore\save+RU;+RD **\dir{-}\restore\save+RD;+LD **\dir{-}\restore\save+LD;+LU **\dir{-}\restore}
\newcommand{\ghost}[1]{*+<1em,.9em>{\hphantom{#1}} \qw}
\newcommand{\gategroup}[6]{\POS"#1,#2"."#3,#2"."#1,#4"."#3,#4"!C*+<#5>\frm{#6}}
\newcommand{\lstick}[1]{*!R!<.5em,0em>=<0em>{#1}}
\newcommand{\Qcircuit}[1][0em]{\xymatrix @*[o] @*=<#1>}
\newcommand{\multiinput}[2]{*!R!<.5em,0em>=<0em>{\hphantom{#2}} \POS[0,0].[#1,0];p !C *!R!<1.1em,0em>=<0em>{#2},p}
\begin{document}

\title{The Degree of Quantum Correlation Required to Speed-Up a Computation}
\date{\today}

\author{Alastair \surname{Kay}}
\affiliation{Department of Mathematics, Royal Holloway University of London, Egham, Surrey, TW20 0EX, UK}
\begin{abstract}
The one clean qubit model of quantum computation (DQC1) efficiently implements a computational task that is not known to have a classical alternative. During the computation, there is never more than a small but finite amount of entanglement present, and it is typically vanishingly small in the system size. In this paper, we demonstrate that there is nothing unexpected hidden within the DQC1 model -- Grover's Search, when acting on a mixed state, {\em provably} exhibits a speed-up over classical with guarantees as to the presence of only vanishingly small amounts of quantum correlations (entanglement and quantum discord) -- while arguing that this is not an artefact of the oracle-based construction. We also present some important refinements in the evaluation of how much entanglement may be present in DQC1, and how the typical entanglement of the system must be evaluated.
\end{abstract}

\maketitle

\section{Introduction}

Any computation whose quantum algorithm has a superior scaling of running time compared to its classical counterpart ought to pass through an intermediate state with non-trivial quantum properties, e.g.\ entanglement. Indeed, for pure-state computations, it has been shown that entanglement is necessary \cite{ent}. However, computation involving mixed states is a far more subtle issue, with no firm resolution, although it is looking increasingly likely that quantum discord must be present in order for there to be an exponential speed-up \cite{eastin2010,cable2015}.

The DQC1 model \cite{original} can provide important insights. Expressed as a decision problem, this is defined as:
\begin{problem}
Given an efficient classical description of a quantum computation $U$ on $n$ qubits, and a promise that either $\Tr(U)>1/\text{poly}(n)$ or $\Tr(U)<-1/\text{poly}(n)$, determine which is the case with error probability $\varepsilon<\frac{1}{3}$.
\end{problem}
\noindent
There is no known classical algorithm which can efficiently solve this problem, while there is a quantum circuit that is remarkably simple (Fig.\ \ref{fig:dqc1}). In this circuit, there is one special (`clean') qubit which is initially prepared in the state $\ket{+}=(\ket{0}+\ket{1})/\sqrt{2}$, and a set of $n$ qubits are prepared in the maximally mixed state $\identity/2^n$. After applying controlled-$U$ between the clean qubit (control) and the mixed qubits (target), we can estimate the probabilities that the clean qubit is in the state $\ket{+}$ (to find $\text{Re}(\Tr(U))$) or $(\ket{0}+i\ket{1})/\sqrt{2}$ (to find $\text{Im}(\Tr(U))$). There is a very obvious division in the system between the clean qubit and the mixed ones. As such, the authors of \cite{original} examined the entanglement between that bipartition of the system, and discovered it was 0. Entanglement seemed to not be necessary for mixed state quantum computation, although other measures of non-classicality such as the quantum discord and measurement-induced disturbance have since been shown to be non-zero across that bipartition \cite{discord,luo}.

\begin{figure}[!tb]
$$
\Qcircuit @C=1em @R=.7em {
\lstick{\ket{+}}	& \ctrl{1} & \meter \\
\multiinput{2}{\frac{\identity}{2^n}} & \multigate{2}{U} & \qw \gategroup{2}{1}{4}{1}{.7em}{\{}	\\
 & \ghost{U} & \qw	\\
 & \ghost{U} & \qw	\\
}
$$
\caption{The DQC1 model. There is no entanglement present in the bipartite split between the top (special) qubit and the rest. Estimation of the trace of $U$ proceeds by performing measurements in the bases $(\ket{0}\pm\ket{1})/\sqrt{2}$ and $(\ket{0}\pm i\ket{1})/\sqrt{2}$.} \label{fig:dqc1}
\vspace{-0.5cm}
\end{figure}
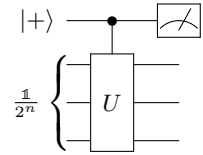

Detecting entanglement in a multipartite state using bipartite divisions is, however, quite subtle. A clear demonstration of this is the distribution of entanglement using separable states \cite{sepdist,kay3}; just because two of the three possible bipartitions of a 3-qubit system have no entanglement does not mean that the third bipartition has no entanglement. So it is with the DQC1 model -- in \cite{steve}, it was demonstrated that by considering different bipartitions, there is indeed entanglement present. Two additional results are provided in \cite{steve}: upper and lower bounds on the maximum amount of entanglement present across any bipartition and a numerical investigation of the entanglement produced by a typical unitary (in the sense of selecting a $U$ uniformly at random from the Haar measure). This investigation suggests that the entanglement is typically exponentially small in $n$. So, although there is some entanglement present, it is a vanishingly small amount on average.

Even the presence of {\em almost} no entanglement in the computation might be seem surprising. However, in this paper, we draw parallels with a noisy version of Grover's Search \cite{grover} in which we demonstrate the existence of a quantum speed-up in the presence of vanishingly small entanglement (such a result could have been proven from \cite{contain} but does not appear to have been), and other non-classicality measures such as the quantum discord \cite{discord1,discord2}. Moreover, unlike the case of DQC1 where the speed-up over classical is only {\em believed}, Grover's search is subject to an oracle-based complexity classification with a proven gap over the classical case. Placed in this context, the power of DQC1 seems much less surprising. However, we only demand the existence of a quantum speed-up, and do not address the important conceptual transition between a polynomial and an exponential speed-up. We also improve the results of \cite{steve}, giving a tight upper bound on the entanglement present in the DQC1 model for any $U$. 

A further criticism of the DQC1 model (or, similarly, the Grover search that we describe here) is that it assumes an implementation of the controlled-$U$ gate. In practice, we have no such implementation, and need to decompose the action in terms of elementary gates. Even if the net implementation of controlled-$U$ yields little or no entanglement, there is no {\em a priori} reason why all the intermediate steps should also be almost entanglement free across all bipartitions. On the other hand, if the controlled-$U$ is treated as an oracle, then we can consider other oracle based problems that are even more trivial in the entanglement sense -- the Bernstein-Vazirani problem transforms a product state into a product state \cite{BV,BV2}. The internal workings of the oracle are evidently important, and receive further discussion.


\subsection{Notation}

Consider a density matrix of $n+1$ qubits, $\rho$. Whenever we write $\rho$ in the context of DQC1, we refer specifically to the output of the circuit for estimating the trace of $U$ (Fig.\ \ref{fig:dqc1}), which is a block matrix of the form
$$
\rho=\frac{1}{2^{n+1}}\left(\begin{array}{cc}
\identity & U^\dagger \\ U & \identity
\end{array}\right).
$$
We denote by $\rho_y$, where $y\in\{0,1\}^n$, the partial transpose of $\rho$ taken over the non-clean qubits $i: y_i=1$, and never the clean qubit, without losing generality. Similarly, $U_y$ denotes the partial transpose of $U$, such that
$$
\rho_y=\frac{1}{2^{n+1}}\left(\begin{array}{cc}
\identity & U_y^\dagger \\ U_y & \identity
\end{array}\right).
$$
The Hamming weight of the string $y$ is denoted by $w_y$.

Throughout this paper, the multiplicative negativity \cite{mult1,mult2} is chosen as the entanglement measure in order to maintain consistency with \cite{steve}.
If $\Lambda_y=\text{spec}\left(\rho_y\right)$ is the spectrum of $\rho_y$, then the multiplicative negativity of $\rho$ is
$$
M_y=\sum_{\lambda\in\Lambda_y}|\lambda|
$$
for that specific bipartition. If no entanglement is present, $M_y=1$ for all $y\in\{0,1\}^n$.

\section{Maximum Entanglement in DQC1}

We start by examining the entanglement properties of the DQC1 model, strengthening the assertions in \cite{steve}.

\begin{theorem} \label{thm:1}
For any unitary $U$, the output $\rho$ of the DQC1 computation always satisfies
$$
M_y\leq \frac{5}{4}
$$
for all possible bipartitions $y\in\{0,1\}^n$.
\end{theorem}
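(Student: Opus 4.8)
The plan is to express $M_y$ exactly in terms of the singular values of $U_y$ and then to maximise the resulting quantity under the one constraint that unitarity imposes. First I would diagonalise $\rho_y$: writing a singular value decomposition $U_y=\sum_{i=1}^{2^n}s_i\ket{u_i}\bra{v_i}$ with $s_i\geq 0$, one checks directly that the vector with upper and lower blocks $\ket{v_i}$ and $\pm\ket{u_i}$ is an eigenvector of $\rho_y$ with eigenvalue $(1\pm s_i)/2^{n+1}$, and these $2^{n+1}$ values (with multiplicity, padding with vanishing singular values) exhaust the spectrum. Note that $U_y$ is generically non-normal, so it really is the singular values, not the eigenvalues, of $U_y$ that enter here. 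Since $s_i\geq 0$ this gives
$$
M_y=\frac{1}{2^{n+1}}\sum_{i=1}^{2^n}\bigl(|1+s_i|+|1-s_i|\bigr)=\frac{1}{2^{n}}\sum_{i=1}^{2^n}\max\{1,s_i\}.
$$

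The only property of $U$ I would invoke is that partial transposition preserves the Frobenius norm, so $\sum_i s_i^2=\Tr(U_y^\dagger U_y)=\Tr(U^\dagger U)=2^n$. It then remains to bound $\sum_i\max\{1,s_i\}$ subject to $\sum_i s_i^2=2^n$ and $s_i\geq 0$. Split the $2^n$ indices into $L=\{i:s_i>1\}$, of size $b$, whose complement contributes exactly $2^n-b$ to the sum; for the remaining terms, Cauchy--Schwarz gives $\sum_{i\in L}s_i\leq\sqrt{b\sum_{i\in L}s_i^2}\leq\sqrt{b\cdot 2^n}$. Hence, writing $x=b/2^n\in[0,1]$,
$$
M_y\;\leq\;1-x+\sqrt{x}\;\leq\;\max_{x\geq 0}\bigl(1-x+\sqrt{x}\bigr)=\frac{5}{4},
$$
the maximum being attained at $x=\tfrac14$.

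I do not expect a serious obstacle: the two steps needing a little care are the block-matrix spectral computation (in particular identifying the relevant quantities as the singular values of the non-normal $U_y$, and checking the eigenvalue count) and the observation that Frobenius-norm invariance is the sole ingredient required from $U$. To confirm that $5/4$ is tight one would exhibit a $U$ for which some $U_y$ has $2^{n-2}$ singular values equal to $2$ and the rest equal to $0$, which saturates both the Cauchy--Schwarz estimate and the condition $x=\tfrac14$; already $n=2$ with $U$ the swap gate and $w_y=1$ suffices, since the partial transpose of $\mathrm{SWAP}$ is twice a rank-one projector.
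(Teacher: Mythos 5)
Your proposal is correct and follows essentially the same route as the paper: singular value decomposition of $U_y$, eigenvalues $(1\pm s_i)/2^{n+1}$, the Frobenius-norm constraint $\sum_i s_i^2=2^n$, and optimisation yielding $5/4$. Your Cauchy--Schwarz step and continuous relaxation in $x=b/2^n$ are a slightly tidier way of carrying out the constrained optimisation (avoiding the paper's integrality caveat on $t$), but the argument is the same in substance.
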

\begin{proof}
Since it is not immediately clear that $U_y$ is normal, we use the singular value decomposition of $U_y$,
$$
U_y=RDV^\dagger
$$
where $R$ and $V$ are unitary matrices and $D$ is diagonal, with non-negative diagonal elements $d_i$. The eigenvalues of $\rho_y$ are unchanged under the action of unitary operations, so apply the transformation of controlled-$R^\dagger$ (controlled off the clean qubit), followed by a Pauli-$X$ on the clean qubit, controlled-$V^\dagger$, and finished by another Pauli-$X$ on the clean qubit. This yields
$$
\rho_y\mapsto\frac{1}{2^{n+1}}\left(\begin{array}{cc}
\identity & D \\ D & \identity \end{array}\right).
$$
The eigenvalues of $\rho_y$ are therefore readily calculated to be $(1\pm d_i)/2^{n+1}$ for $i=1\ldots 2^n$.

How, then, are we to pick $\{d_i\}$ such that the value of 
$$
M_y=1+2\sum_{i:d_i>1}\frac{d_i-1}{2^{n+1}}
$$
is maximised? We perform a constrained optimisation by noting from \cite{steve} that
$$
2^n=\Tr(UU^\dagger)=\Tr(U_yU_y^\dagger)=\sum_id_i^2.
$$
Assume that $t$ of the values $d_i>1$ ($i=1\ldots t$). Hence,
$$
M_y\leq\max_{t,\{d_i\}}1-\frac{t}{2^n}+\frac{1}{2^n}\sum_{i=1}^td_i
$$
subject to $\sum_{i=1}^{2^n}d_i^2=2^n$. The optimal choice of the $d_i$ is clear: $d_i=0$ for $i=t+1\ldots 2^n$ and
$$
d_i=\sqrt{\frac{2^n}{t}}
$$
otherwise, which gives
$$
M_y\leq\max_t1-\frac{t}{2^n}+\sqrt{\frac{t}{2^n}}.
$$
This is maximised by $t=2^n/4$ (requiring $n\geq 2$ such that $t$ is an integer), yielding
$$
M_y\leq\frac{5}{4},
$$
compared to the maximum possible value of $\approx 2^{n/2}$.
\end{proof}
The limit $M_y=\frac{5}{4}$ is readily saturated. For example, when $n=2$ we can use
$$
U=\left(\begin{array}{cccc}
0 & 0 & 0 & 1 \\
0 & 1 & 0 & 0 \\
0 & 0 & 1 & 0 \\
1 & 0 & 0 & 0
\end{array}\right)
$$
since $U_{01}$ has eigenvalues $2,0,0,0$ as required for the optimal construction. For larger $n$, a construction such as $U\otimes\identity^{\otimes(n-2)}$ would provide the requisite eigenvalues. There are then many ways of dressing this operator with unitaries that don't affect the partial transpose to disguise its structure slightly. For example, \cite{steve} uses a series of controlled-not gates. Incidentally, this trivial example shows that there are many bipartitions for which $M_y=\frac{5}{4}$: of the $2^n-1$ possible choices, all $2^{n-1}$ choices that have $y_1\oplus y_2=1$ exhibit this value.

This section reiterates the conclusion of \cite{steve}: there are bipartitions of the DQC1 model in which there is some entanglement present, and it can be present up to a finite amount. Indeed, there can be many such bipartitions. What we want to investigate in the rest of this paper is how surprising this result is -- should the fact that we are guaranteed that no more than a small but finite amount of entanglement is present in a bipartition suggest to us that a quantum speed-up should be more difficult to realise? What about the fact that for most unitaries it seems that the entanglement is vanishingly small (exponentially small in the system size, $N$)? To that end, we are now going to compare to a depolarised version of Grover's Search, for which we will show that even in situations where there is a provable (oracle-based) quantum speed-up, this can occur when we are guaranteed that there is never more than an exponentially small amount of entanglement in any bipartition. This is a much stronger statement than has been made for DQC1. As such, using DQC1 to make conclusions about the existence of a computational speed-up in the presence of little or no entanglement seems obsolete.

\section{Comparison with Grover's Search}

A standard formulation of Grover's Search algorithm \cite{grover} starts with an initial state $\ket{+}^{\otimes n}$, trying to evolve to some (unknown) target state $\ket{x_0}$. The evolution is restricted to a subspace spanned by $\ket{x_0}$ and
$$
\ket{\psi^\perp}=\frac{1}{\sqrt{2^n-1}}\sum_{x\neq x_0}\ket{x},
$$
such that the populated states are of the form
$$
\ket{\Psi}=\cos\theta\ket{x_0}+\sin\theta\ket{\psi^\perp}
$$
where $\theta\in[0,\pi/2]$. For finite $n$, only integer multiples of $\sin^{-1}(2^{-n})$ are allowed. If we select a particular bipartition, then $\ket{x_0}=\ket{x_y}\ket{x_{\bar y}}$ where, here, the subscript $y$ merely refers to the set of qubits on which the state is defined. Similarly, we can define
$$
\ket{\psi^\perp_y}=\frac{1}{\sqrt{2^{w_y}-1}}\sum_{x\in\{0,1\}^{w_y}:x\neq x_y}\ket{x},
$$
allowing $\ket{\psi^\perp}$ to be expressed as
\begin{equation*}
\begin{split}
\ket{\psi^\perp}=&\sqrt{\frac{(2^{w_y}-1)(2^{n-w_y}-1)}{2^n-1}}\ket{\psi^\perp_y}\ket{\psi^\perp_{\bar y}} \\
&\!\!\!\!+\sqrt{\frac{2^{n-w_y}-1}{2^n-1}}\ket{x_y}\ket{\psi^\perp_{\bar y}} +\sqrt{\frac{2^{w_y}-1}{2^n-1}}\ket{\psi^\perp_y}\ket{x_{\bar y}}.
\end{split}
\end{equation*}
Evidently, the state across this bipartition may be written as a state with no more than two Schmidt coefficients. The maximum entanglement is when the two Schmidt coefficients are $\lambda_1=\lambda_2=\half$ ($w_y=1$ as $n\rightarrow\infty$), yielding a value of $M_y=2$ for this standard, pure state, version of Grover's search. There is only ever a finite amount of entanglement between any bipartition, much like DQC1, the only difference being the value.

We will now show that there is a variant of this algorithm for which the entanglement can be made vanishingly small (no more than $M_y=1+2^{-n(\half-\epsilon)}$ for any $\epsilon>0$). Consider using an initial state of
$$
\rho=p\proj{+}^{\otimes n}+(1-p)\frac{\identity}{2^n}
$$
for any $0<p\leq 1$. A single run of the algorithm requires the standard quantum time, $O(2^{n/2})$, and succeeds with probability $\tilde p=p+\frac{1-p}{2^n}$. By allowing $L$ repetitions, there is a time advantage over classical (on average) if $2^{n/2}L<2^{n-1}$, and it succeeds with probability $1-(1-\tilde p)^L\approx\tilde pL$ for small $\tilde p$. As such, for any finite $\epsilon$, $p\sim2^{-n(\half-\epsilon)}$  provides an advantage over classical search algorithms while we evidently anticipate that for smaller $p$, there is less entanglement present.

\subsection{Entanglement}

We turn to calculating the entanglement present in the depolarised algorithm. Progress through the algorithm can be specified by a parameter $\theta$, such that the state is
$$
\rho(\theta)=p\proj{\Psi(\theta)}+(1-p)\frac{\identity}{2^n}
$$
Assume the Schmidt coefficients for the state $\ket{\Psi}$ are $\cos^2\phi$ and $\sin^2\phi$ with respect to bipartition $y$. In the large $n$ limit, $\phi$ can take on any value $(0,\pi/2]$. $\rho(\theta)$ is unitarily equivalent (where the unitaries are local with respect to the bipartition, and hence do not change the eigenvalues under the partial transposition) to a state
\begin{equation}
\left(\begin{array}{ccccc}
p\cos^2\phi+\frac{1-p}{2^n} & 0 & 0 & p\cos\phi\sin\phi & \\
0 & \frac{1-p}{2^n} & 0 & 0 & \\
0 & 0 & \frac{1-p}{2^n} & 0 & \\
p\cos\phi\sin\phi & 0 & 0 & p\sin^2\phi+\frac{1-p}{2^n} & \\
&&&& \frac{1-p}{2^n}\identity_{2^n-4}
\end{array}\right)	\label{eqn:grover}
\end{equation}
where the upper left $4\times 4$ describes the space spanned by $\ket{x_y}\ket{x_{\bar y}}$, $\ket{x_y}\ket{\psi^\perp_{\bar y}}$, $\ket{\psi^\perp_y}\ket{x_{\bar y}}$ and $\ket{\psi^\perp_y}\ket{\psi^\perp_{\bar y}}$. We can readily take the partial transpose of this, and calculate the eigenvalues: $\frac{1-p}{2^n}$ (repeated $2^n-4$ times), $\frac{p}{2}(1\pm\cos(2\phi))+\frac{1-p}{2^n}$,  and $\frac{1-p}{2^n}\pm \frac{p}{2}\sin(2\phi)$. Hence,
$$
M_y=1-\frac{1-p}{2^{n-1}}+p\sin(2\phi).
$$
This is maximised at $\phi=\pi/4$, indicating that for all bipartitions, and at all points during the computation,
$$
M<1+p-\frac{1-p}{2^{n-1}}.
$$
When $p$ is exponentially small, the entanglement is always exponentially small and there is still a provable (oracle-based) speed-up in searching over the classical case. This compares favourably to DQC1 wherein typical unitaries have been numerically shown to produce exponentially small amounts of entanglement \cite{steve}, while having a believed speed-up over classical.

\subsection{Quantum Discord}

Perhaps it is not entanglement that needs to be present in a mixed state quantum computation. Instead, other measures of non-classicality, such as the quantum discord have arisen. If exponentially small entanglement might be considered surprising, perhaps it is the case that there is finite quantum discord? We will now show that the discord is also vanishingly small. A precise definition \cite{discord1,discord2} of the discord is unnecessary. It suffices to know that it is a non-negative quantity which is always 0 in the classical limit and, for pure states, reduces to the entanglement entropy.

We take $\rho(\theta)$ as for the previous calculation,
$$
\left(\begin{array}{ccccc}
p\cos^2\phi+\frac{1-p}{2^n} & 0 & 0 & p\cos\phi\sin\phi & \\
0 & \frac{1-p}{2^n} & 0 & 0 & \\
0 & 0 & \frac{1-p}{2^n} & 0 & \\
p\cos\phi\sin\phi & 0 & 0 & p\sin^2\phi+\frac{1-p}{2^n} & \\
&&&& \frac{1-p}{2^n}\identity_{2^n-4}
\end{array}\right)
$$
and define a second state $\sigma=$
$$
\left(\begin{array}{ccccc}
p\cos^2\phi+\frac{1-p}{2^n} & 0 & 0 & 0 & \\
0 & \frac{1-p}{2^n} & 0 & 0 & \\
0 & 0 & \frac{1-p}{2^n} & 0 & \\
0 & 0 & 0 & p\sin^2\phi+\frac{1-p}{2^n} & \\
&&&& \frac{1-p}{2^n}\identity_{2^n-4}
\end{array}\right).
$$
By virtue of being diagonal with respect to a separable basis, the state $\sigma$ has 0 discord \cite{datta}, and yet is very close in terms of trace distance to $\rho$:
$$
d=\text{Tr}|\rho-\sigma|=p\sin(2\phi).
$$
The quantum discord is continuous \cite{continuousdiscord,cont2}, meaning that we can bound the amount of discord present in $\rho(\theta)$:
$$
D(\rho)\leq 8d(n-1)+4h(d)
$$
where $h(x)=-x\log_2x-(1-x)\log_2(1-x)$ is the binary entropy. This scales as $O(n 2^{-n(\half-\epsilon)})$, and is therefore also vanishingly small across all bipartitions simultaneously, and for all steps of the algorithm.

\subsection{Decomposing the Oracle}

An oracle-based problem (whether this is explicit, as for Grover's search, or implicit, requiring the implementation of controlled-$U$ in DQC1) is useful for determining the computational complexity of a problem (with respect to that oracle), often enabling lower bounds as well as upper bounds. However, it has the potential to mask a lot of the entanglement properties. If one has to implement an oracle by using a collection of smaller, more manageable, operations, we need to be sure that each of those individual operations does not leave an intermediate state that has large amounts of entanglement. After all, there are many quantum computations that start and end in separable states, and it is only the intermediate products that are entangled. The Bernstein-Vazirani algorithm is one case that hides all the entanglement in such a way \cite{BV,BV2}. Typical DQC1 computations probably also hide their entanglement -- imagine decomposing $U$ in terms of one-qubit unitaries and controlled-nots, and implementing the controlled-$U$ by consecutively applying the appropriate controlled-one-qubit and Toffoli gates\footnote{This is certainly far from the only decomposition that one could make, and this is merely an illustrative argument.}. The first Toffoli gate is likely to introduce a finite amount of entanglement.

We claim that there need be no corresponding difficulty when using Grover's search. The Grover iterator may be written as
$$
H^{\otimes n}P_0H^{\otimes n}P_{x_0},
$$
where $H$ is the Hadamard gate, and $P_x$ is an $n$-qubit controlled-phase gate which adds a phase of $\pi$ only to the state $\ket{x}$. Let us take each step in turn. If $\rho_{\text{in}}$ is the state before the action of the iterator, and $\rho_{\text{out}}$ is the state after the application, then our results so far show that both have exponenitally small entanglement and discord. Now, observe that after the action of $P_{x_0}$, the form of $\rho_{\text{in}}$ in Eq.\ (\ref{eqn:grover}) is only changed by adding a negative sign on the off-diagonal elements. Both the entanglement and discord are unchanged. Application of a set of local unitaries ($H^{\otimes N}$) also cannot change these values. What about the application of $P_0$? Instead, we observe that after the action by $P_0$, this is the same as state $\rho_{\text{out}}$ with $H^{\otimes n}$ applied to it, which must have the same entanglement properties as $\rho_{\text{out}}$.

Should we decompose the operations any further, perhaps by writing $P_{x_0}$ in terms of a universal set of two-qubit gates? We suggest this is unnecessary as it appears that in a wide variety of experimental implementations, the gate $P_{x_0}$ can be implemented directly \cite{multi1,multi2,multi3}. So, we are justified in our claim that every elementary step of the algorithm leaves us in a state of almost no entanglement and almost no discord.

\section{Conclusions}

One of the aspects of the DQC1 model of computation that originally generated much interest was the suggestion that it achieved its computational speed-up without entanglement. In fact, it does use entanglement \cite{steve}, and that entanglement can be up to a finite amount across many different bipartitions of the system. Even on those bipartitions for which there is no entanglement, it has been shown \cite{discord} instead that other measures of non-classicality are present in finite amounts and could be a resource for the computational speed-up that is believed to be present in DQC1.

In this paper, we have contrasted this with Grover's search when acting on an initial state that has a large admixture of the maximally mixed state. The degree of admixture for which a computational speed-up is still possible (without any attempt at error correction) was readily derived and, as such, it was shown that over every bipartition both the entanglement and quantum discord are vanishingly small. Moreover, we know that the computational speed-up is present for Grover's search, since we know the minimum running time for the classical algorithm (using a certain oracle). We conclude that DQC1 is nothing special -- Grover's search exhibits stronger properties in every way (guaranteed c.f.\ assumed speed-up, guaranteed exponentially little entanglement c.f.\ little entanglement on average, exponentially little discord c.f.\ finite discord), except for one feature -- DQC1 is believed to have an exponential speed-up while Grover's search is only polynomial.

The presence of quantum discord in the DQC1 model has been taken by a number of authors as ``the first real evidence that mixed-state quantum computation can have an advantage over classical computation even when entanglement is absent'' \cite{revmod}, choosing to view the almost-no entanglement as essentially equivalent to no entanglement. However, here, we have almost no entanglement, and almost no quantum discord (and, indeed, any measure of quantum resource that is continuous). We are left to conclude that there is a marked difference between `almost no' and `no' entanglement/discord. Exactly what is required for a computation to gain speed over a classical one remains unclear, but it seems that only small amounts of whatever resource suffice. However, this leaves open the possibility that larger amounts of a resource, such as quantum discord, are necessary for an exponential  vs.\ polynomial speed-up.

On a final note, we wish to raise an important issue with the calculation of the typical entanglement \cite{steve} or discord \cite{discord} across a bipartition in the DQC1 computation. The decision problem for DQC1 was defined by explicitly bounding the trace away from 0. However, typical unitaries have an expected trace of 0, and the trace is closely centred on the 0 value. As such, the decision problem explicitly asserts that the unitaries involved are not typical, and further studies are required to clarify the impact of this, although the studies of special cases that are performed in the Appendix suggest that the impact is not significant.

{\em Acknowledgements:} We thank L.C.\ Kwek for introductory conversations.

\newpage
\appendix

\section{Typical Entanglement}

In this Appendix, we illustrate some important features of the calculation of typical entanglement in the DQC1 model. Rather than permitting an arbitrary Haar-random unitary, we concentrate on a subset of unitaries, based on graph states, that facilitate simpler calculations. We define the random unitary, $U$, in the following way: consider $n$ qubits and 
select a random diagonal matrix, meaning that each diagonal element is 
selected at random to be $e^{i\theta_x}$ for all $x\in\{0,1\}^n$. We then conjugate this with Hadamard gates on each 
qubit (i.e.\ both before and after the diagonal matrix) before conjugating by 
controlled phase gates between all pairs of qubits $(i,i+1)$. These conjugations are the unitaries required for preparing the one-dimensional cluster state.

The reason for concentrating on graph states is that there is a well established formalism for calculating the partial transpose \cite{kay1,kay2,kay3}. If the eigenvalues of $U$ are written as a vector
$$
\ket{\Theta}=\sum_xe^{i\theta_x}\ket{x},
$$
then the eigenvalues of $U_{01010101\ldots 01}$ are given by the vector $R\ket{\Theta}$ where
$$
R=H^{\otimes n}\left(\sum_{x\in\{0,1\}^n}(-1)^{\sum_{i=1}^{n-1}x_ix_{i+1}}\proj{x}\right)H^{\otimes n}.
$$

\begin{lemma}
If $n$ is even, then every matrix element of $R$ has $|\bra{y}R\ket{z}|=\frac{1}{2^{n/2}}$. In every row, there are $\half(2^n+2^{n/2})$ positive entries, and $\half(2^n-2^{n/2})$ negative entries.
\end{lemma}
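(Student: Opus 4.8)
The plan is to turn everything into sign sums over $\mathbb{F}_2^n$ and work with them directly. Using $\bra{y}H^{\otimes n}\ket{x}=(-1)^{y\cdot x}/2^{n/2}$, conjugating the diagonal $\pm1$ matrix by Hadamards gives
$$
\bra{y}R\ket{z}=\frac{1}{2^n}\sum_{x\in\{0,1\}^n}(-1)^{q(x)+(y\oplus z)\cdot x},\qquad q(x):=\sum_{i=1}^{n-1}x_ix_{i+1},
$$
so the whole lemma is a statement about the Gauss sums $S(v):=\sum_{x}(-1)^{q(x)+v\cdot x}$, $v\in\{0,1\}^n$. I must show $|S(v)|=2^{n/2}$ for every $v$, which gives $|\bra{y}R\ket{z}|=2^{-n/2}$, and then count the signs of the $S(v)$ in a row.

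For the modulus I would eliminate variables two at a time. Summing over $x_1$, the factor $\sum_{x_1}(-1)^{x_1(x_2\oplus v_1)}$ equals $2$ if $x_2=v_1$ and $0$ otherwise; substituting $x_2=v_1$ converts the term $x_2x_3$ into a linear term $v_1x_3$, produces a constant $v_1v_2$, and leaves the quadratic part $x_3x_4+\dots+x_{n-1}x_n$ — again a path quadratic form, now in $x_3,\dots,x_n$, perturbed only by a linear form. Iterating, each round removes two variables and contributes a factor $2$; after $n/2$ rounds (this is where $n$ even enters) all variables are gone and $S(v)=\pm2^{n/2}$. For odd $n$ the recursion would strand a single variable carrying only a linear term, giving $S(v)\in\{0,\pm2^{(n+1)/2}\}$, which is exactly why the parity hypothesis is needed. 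Alternatively one can invoke the classification of quadratic forms over $\mathbb{F}_2$: the symmetric bilinear form associated with $q$ is the adjacency matrix of the path $P_n$, nonsingular over $\mathbb{F}_2$ iff $n$ is even, so $q+v\cdot(\,\cdot\,)$ is nondegenerate and its Gauss sum has modulus $2^{n/2}$.

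For the row counts, fix $y$; as $z$ runs over $\{0,1\}^n$ the string $v=y\oplus z$ runs over all of $\{0,1\}^n$, so the entries of row $y$ are precisely the numbers $2^{-n}S(v)$. Write $S(v)=\epsilon(v)2^{n/2}$ with $\epsilon(v)=\pm1$ and let $N_\pm$ be the number of $v$ with $\epsilon(v)=\pm1$. Orthogonality of characters gives
$$
\sum_{v}S(v)=\sum_{x}(-1)^{q(x)}\sum_{v}(-1)^{v\cdot x}=2^n(-1)^{q(0)}=2^n,
$$
hence $N_+-N_-=2^{n/2}$; combined with $N_++N_-=2^n$ this yields $N_+=\half(2^n+2^{n/2})$ and $N_-=\half(2^n-2^{n/2})$, independently of $y$.

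The one place needing care is the modulus step: I must track the linear terms and constants accumulated at each stage and check that the reduced exponent really is a path quadratic form plus a linear form, so that the argument is uniform over all $v$ and the parity of $n$ is used exactly once, at the final elimination. Everything after that is character-sum bookkeeping.
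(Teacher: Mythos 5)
Your proof is correct and follows essentially the same route as the paper's: reduce to the sums $S(y\oplus z)$, establish $|S(v)|=2^{n/2}$ by eliminating two variables at a time (the paper phrases this as an induction that peels off $x_{n-1},x_n$ rather than $x_1,x_2$, but it is the same bookkeeping), and deduce the sign counts from $\sum_v S(v)=2^n$ (the paper's $\sum_z R_z=1$) together with the entries being real and of equal modulus. The alternative argument via nondegeneracy of the path form's associated bilinear form is a nice extra but not needed.
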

\begin{proof}
The matrix elements have the form
$$
\bra{y}R\ket{z}=\frac{1}{2^n}\sum_{x\in\{0,1\}^n}(-1)^{x\cdot(z+y)}(-1)^{\sum_{i=1}^{n-1}x_ix_{i+1}}.
$$
First, observe that $\bra{y}R\ket{z}=\bra{000\ldots 0}R\ket{z\oplus y}$, so it suffices to concentrate solely on the first row. The elements of all other rows are just permutations of the first. We will denote the elements of this first row by $R_z$.

Notice that
$$
\sum_{z\in\{0,1\}^n}R_z=1.
$$
This means that if it is true that all $2^n$ elements have $|R_z|=1/2^{n/2}$, then since they are all real, it must be that there are $\half(2^n+2^{n/2})$ positive entries, and $\half(2^n-2^{n/2})$ negative entries.

We will prove the first half of the lemma by induction, using the base case of $n=2$:
$$
R(00)=\half\qquad R(01)=\half\qquad R(10)=-\half\qquad R(11)=\half.
$$

Assume that $|R^{(n-2)}(z)|=\frac{1}{2^{n/2-1}}$. Now consider $R^{(n)}(z\|z_{n-1}z_{n})$. For each term $x$ in the sum for $R^{(n-2)}(z)$, there are 4 terms in $R^{(n)}(z\|z_{n-1}z_{n})$: $x\|x_{n-1}x_n$. We can consider each of them separately, and how they affect the additional phases
$$
(-1)^{z_nx_n+z_{n-1}x_{n-1}+x_nx_{n-1}+x_{n-1}x_{n-2}}.
$$
One of the phases can be moved into an effective $z$:
\begin{equation*}
\begin{split}
R^{(n)}(z\|z_{n-1}z_{n})=&\frac{1}{4}\sum_{x_n,x_{n-1}}(-1)^{z_nx_n+z_{n-1}x_{n-1}+x_nx_{n-1}}\times	\\
&R^{(n-2)}(z\oplus(00\ldots 0x_{n-1})).
\end{split}
\end{equation*}
On performing the sum over $x_n$, we either have $\half R^{(n-2)}(z)$ (if $z_n=0$) or $\half R^{(n-2)}(z\oplus(00\ldots 01))$. By assumption both have the same absolute value, $1/2^{n/2}$.
\end{proof}

\begin{figure}[!tb]
\begin{center}
\includegraphics[width=0.45\textwidth]{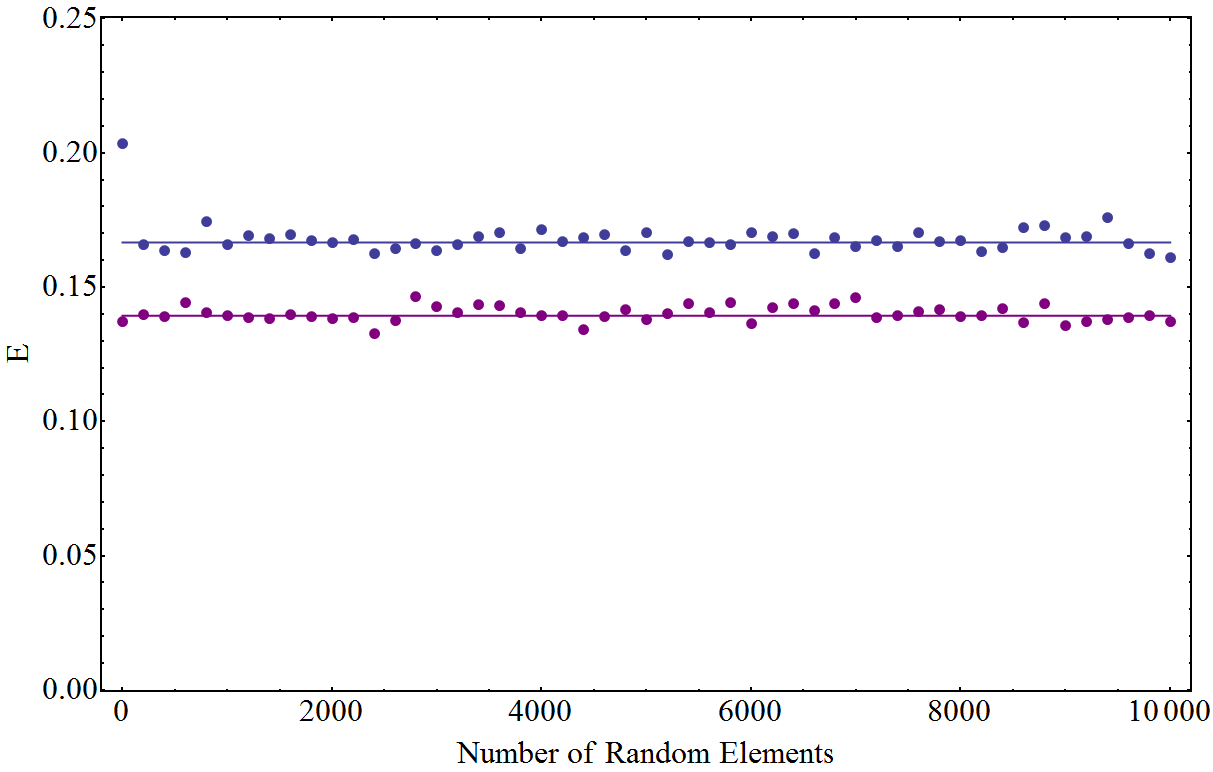}
\end{center}
\vspace{-0.5cm}
\caption{Comparison of theoretical results (lines), Eq.\ (\ref{eqn:E}), and average for 10000 samples (points) for both 1D (upper, blue) and 2D (lower, purple) random walks.}\label{fig:random}
\vspace{-0.4cm}
\end{figure}

Our aim is to evaluate the multiplicative negativity:
$$
M_y=1+\frac{1}{2^n}\sum_{\lambda\in R\ket{\Theta}:\lambda>1}(\lambda-1).
$$
We assume that each element of $R\ket{\Theta}$ is independent  (this may not be strictly true, but is a good approximation). This reduces $M_y$ to $M_y=1+E$ where $E$ is the expected value of $\max(\lambda-1,0)$. To evaluate this, we need the probability distribution for $\lambda$. As a consequence of the Lemma, we have
$$
P(\lambda\geq1)=P\left(\left|\sum_{i=1}^{\half(2^n+2^{n/2})}\!\!\!e^{i\theta_i}-\sum_{i=\half(2^n+2^{n/2})+1}^{2^n}\!\!\!\!\!\!\!e^{i\theta_i}\right|\geq 2^{n/2}\right).
$$
However, if the probability distribution of $\theta_i$ is unaffected by a $\pi$ shift, this is entirely equivalent to
$$
P\left(\left|\sum_{i=1}^{2^n}e^{i\theta_i}\right|\geq 2^{n/2}\right).
$$

\begin{figure}[!t]
\begin{center}
\includegraphics[width=0.45\textwidth]{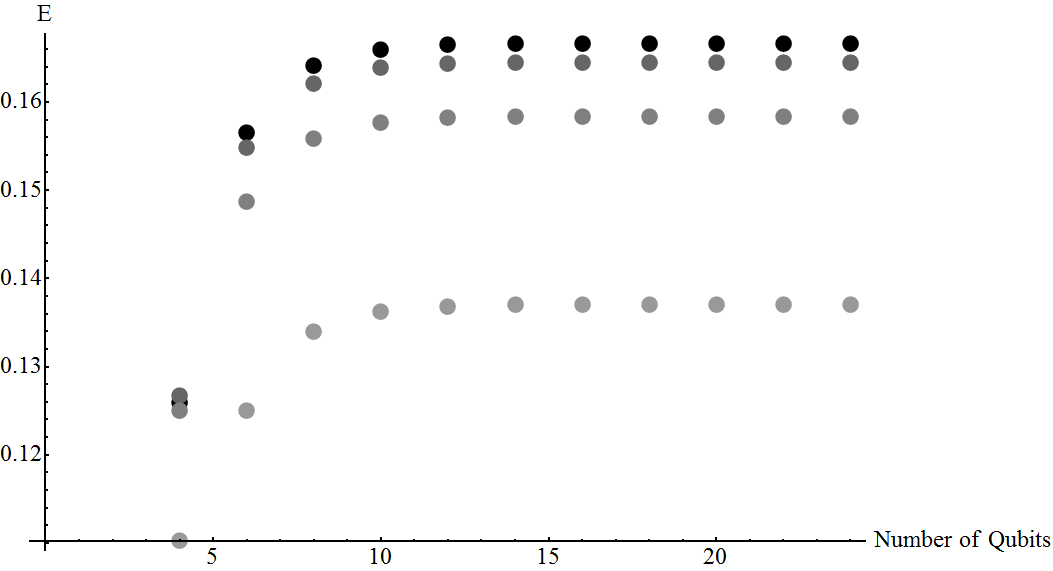}
\end{center}
\vspace{-0.5cm}
\caption{Typical entanglement for a graph diagonal unitary of fixed trace $\Tr(U)/2^n=\frac{1}{4},\frac{5}{8},\frac{3}{4},\frac{7}{8}$ for shades black through to light grey respectively. The larger the trace, the less entanglement is present.}\label{fig:random2}
\vspace{-0.5cm}
\end{figure}

We consider two cases. Firstly, each $\theta_i$ is a random choice, $\theta_i\in\{0,\pi\}$. In this case, 
$
\left|\sum_{i=1}^{2^n}e^{i\theta_i}\right|
$
is simply the expected distance of a random walk in 1D. Secondly, $\theta_i\in[0,2\pi)$ leads to the interpretation of
$
\left|\sum_{i=1}^{2^n}e^{i\theta_i}\right|
$
as a random walk in 2D\footnote{We define this to be the walk such that at each step, a step length of 1 is taken in a random direction in the plane. Some authors choose to define it as a walk on a square lattice.}. In either case, Rayleigh's solution for the probability distribution at large $n$ is
$$
P\left(x\leq \left|\sum_{i=1}^{2^n}e^{i\theta_i}\right|<x+\delta x\right)=\left\{\begin{array}{cc}
\sqrt{\frac{2}{\pi 2^n}}e^{-x^2/2^{n+1}}dx & \text{1D} \\
\frac{2x}{2^n}e^{-x^2/2^n}dx & \text{2D} 
\end{array}\right.
$$
in order to calculate
$$
E=\int_{2^{n/2}}^\infty P\left(x\leq \left|\sum_{i=1}^{2^n}e^{i\theta_i}\right|<x+\delta x\right)\left(\frac{x}{2^{n/2}}-1\right),
$$
which yields
\begin{equation}
E=\left\{\begin{array}{cc}
\sqrt{\frac{2}{\pi e}}-\text{erfc}\left(\frac{1}{\sqrt{2}}\right) & \text{1D} \\
\frac{\sqrt{\pi}}{2}\text{erfc}(1) & \text{2D}
\end{array}\right.
\label{eqn:E}
\end{equation}
Fig. \ref{fig:random} provides numerical confirmation of this calculation. We conclude that in both cases there is a constant amount of entanglement. This is in contrast with the numerical results in \cite{steve} for general random unitaries.

However, there is one further subtlety that it is important to raise. Recall that for the DQC1 model we are given the promise that the trace of $U$ is bounded away from 0. Meanwhile, the probability distribution for the trace of a typical unitary is strongly centred on the value 0 -- in effect, we are post-selecting on highly atypical unitaries so typicality arguments might not reveal everything. However, it turns out that this only serves to reduce the amount of entanglement present. For a simple argument, consider the case where $\Tr(U)=2^n$. In this case, we know that $U=\identity$, and that $M_y=1$, less than the above-calculated values. More generally, we have numerically examined the random unitary model described above (restricting the eigenvalues to $\pm 1$), fixing the trace to be different values of $\Tr(U)$, see Fig.\ \ref{fig:random2}. The amount of entanglement present decreases as the value of the trace is increased.


\end{document}